\newcommand{\nosemic}{\SetEndCharOfAlgoLine{\relax}}
\newtheorem{lem}{Lemma}
\newtheorem{rem}{Remark}
\newtheorem{prop}{Proposition}
\newcommand{\Bc}{\mathcal{B}}
\newcommand{\Sc}{\mathcal{S}}
\newcommand{\Dc}{\mathcal{D}}
\newcommand{\Yc}{\mathcal{Y}}
\newcommand{\Ac}{\mathcal{A}}
\newcommand{\E}{\mathbb{E}}
\newcommand{\Prob}{\mathbb{P}}
\DeclareMathOperator*{\argmax}{arg\,max}
\renewenvironment{abstract}
  {{\centering\large\bfseries Abstract\par}\vspace{0.7ex}%
    \bgroup
       \leftskip 20pt\rightskip 20pt\small\noindent\ignorespaces}%
  {\par\egroup\vskip 0.25ex}
\newcommand{\pibase}{\pi_{\text{base}}}
\newcommand{\Ec}{\mathcal{E}}
\newcommand{\Nc}{\mathcal{N}}
\newcommand{\Oc}{\mathcal{O}}
\newcommand{\Tc}{\mathcal{T}}
\newcommand{\Vc}{\mathcal{V}}
\begin{document}

\title{Optimizing Long-term Value for Auction-Based Recommender Systems via On-Policy Reinforcement Learning}

\author{Ruiyang Xu*, Jalaj Bhandari*, Dmytro Korenkevych, Fan Liu, Yuchen He, Alex Nikulkov, Zheqing Zhu}
\affil[]{Meta AI}
\renewcommand\Authands{ and }

\maketitle

\begin{abstract}
Auction-based recommender systems are prevalent in online advertising platforms, but they are typically optimized to allocate recommendation slots based on immediate expected return metrics, neglecting the downstream effects of recommendations on user behavior. In this study, we employ reinforcement learning to optimize for long-term return metrics in an auction-based recommender system. Utilizing temporal difference learning, a fundamental reinforcement learning algorithm, we implement an \textit{one-step policy improvement approach} that biases the system towards recommendations with higher long-term user engagement metrics. This optimizes value over long horizons while maintaining compatibility with the auction framework. Our approach is grounded in dynamic programming ideas which show that our method provably improves upon the existing auction-based base policy. Through an online A/B test conducted on an auction-based recommender system which handles billions of impressions and users daily, we empirically establish that our proposed method outperforms the current production system in terms of long-term user engagement metrics.
\end{abstract}

\def\thefootnote{*}\footnotetext{These authors contributed equally to this work}

\section{Introduction}
With exponential growth of digital information, recommender systems have come to play a pivotal role in various applications, from personalized movie recommendations at Netflix to product recommendation over e-commerce platforms, to help users access relevant/interesting content \citep{gomez2015netflix,smith2017two,lu2015recommender}. 
Given widespread use, research and practice of recommender system design has constantly evolved in the past two decades. These include traditional recommendation strategies, like content based approaches \citep{pazzani2007content,lops2011content} and collaborative filtering methods \citep{schafer2007collaborative, ekstrand2011collaborative, shi2014collaborative, koren2009matrix} which model user preferences for different items either using item features or by using past ratings of a user (or similar users), to modern deep supervised learning based approaches which personalize user experience by modeling click through rates \citep{zhang2019deep}, with many instances of successful deployment at industry scale. See \citep{covington2016deep,cheng2016wide,okura2017embedding} for example of prototypes used for video recommendation at YouTube, app recommendations at Google Play and news recommender systems for Yahoo!.

Auction based systems \citep{varian2007position,evans2008economics} is a critical component for platforms like Meta, Google, Yahoo! etc. which run large marketplaces such as online advertising platforms which allocate ad slots based on advertiser bids. These systems typically use a Vickrey (second price) auction or a Vickrey-Clarke-Groves (VCG) auction\footnote{The VCG auction encompasses the traditional Vickrey auction as a special case.} where the winner pays the next highest bidder's bid. An attractive property of the VCG auction is that bidding the true value is a dominant strategy\footnote{Note that for multiple slots, platforms typically rely on generalized second price (GSP) auctions which truth-telling is not a dominant strategy \citep{edelman2007internet, varian2007position}. For our exposition, we assume a single slot with allocation done using a second price auction.} for all bidders \citep{edelman2007internet, varian2014vcg}, which simplifies bidder's decisions. That is, each participant is incentivised to bid what they value the item to be worth, while the winner gets a discount. 
This has led to widespread adoption of the second price auction in large scale online ad marketplaces \citep{varian2014vcg}.

In addition to bid values, auction based recommender systems also typically account for ``conversion rates'' to allocate slots to recommendations. However, both these metrics only account for the immediate value of showing a recommendation in a slot. In this way, auction based recommender systems are designed to optimize myopically for short-term engagement metrics and may not reflect the impact on long-run user engagement. Thus, optimizing for the long-term in an auction based system poses a unique challenge due to the auction mechanism. In this work, we propose a reinforcement learning (RL) approach to bias an auction based recommender system towards strategies that account for downstream impact of recommendations on user behavior. 

Reinforcement learning provides a mathematical formalism to optimize for long horizon outcomes and has gained traction through superior performance in different applications, including arcade games \citep{mnih2015human}, robotics \citep{gu2017deep, smith2017two} and navigation \citep{tai2017virtual}. Many recent works have also shown potential of using RL methods for real-time recommender systems, including off-policy methods like Q-learning \citep{zou2019reinforcement,zheng2018drn} and actor-critic algorithms \citep{chen2019top, chen2022off}. However, balancing between bidder's value and long-term user engagement is one of the key constraints for system designers in applying RL to an auction based recommender system. Unlike most of the prior work, we cannot design a system from scratch using RL; instead we must work within a framework which continues to optimize for bidder's value as well, at least partially. Moreover, using off-policy RL algorithms to search for an optimal policy poses significant challenges due to distribution shift. Offline collected data rarely satisfies coverage conditions required for off-policy RL algorithms to find a (near) optimal policy. In this work, we make the following contributions:
\begin{itemize}
    \item We use a reinforcement learning approach to bias an auction based recommender system toward recommendations that improve metrics for long-term user engagement. Our approach can be understood as \textit{one step policy improvement} over the auction based recommender policy.
    \item Using ideas from classic dynamic programming theory, we make a simple argument to show that our method provably improves over the base (auction) policy.
    \item We implement our method in an industrial scale real-time recommender system serving billions of users daily and empirically show performance improvements through online A/B testing.
\end{itemize}

This paper is organized as follows. In section \ref{sec: lit review}, we briefly review prior work on recommender systems with a focus on reinforcement learning based approaches. Section \ref{sec: problem formulation} instantiates a reinforcement learning setup to optimize for long-term value in a recommender system problem setting we consider. In section \ref{sec: pi approach}, we describe our online reinforcement learning approach which provably improves over the baseline auction based recommender systems in terms of long-run user engagement metrics (we use ``conversions'' as a proxy metric of long-run user engagement). We end the paper in section \ref{sec: experiments} with experimental results from an online A/B test showing significant improvement in conversion statistics over an six week period.

\begin{rem}
Our approach is not specifically tied to the surrogate metrics we use to quantify long-term user engagement. The surrogate metrics should be looked at  through the lens of reward design.
\end{rem}

\section{Related Work}
\label{sec: lit review}
Many of the existing approaches to personalized recommendations, like content and collaborative filtering based methods \citep{lops2011content,ekstrand2011collaborative,shi2014collaborative}, sequential recommender systems \citep{hidasi2015session,quadrana2018sequence,wang2019sequential} and deep supervised learning based methods \citep{hidasi2015session,zhang2019deep}, optimize metrics of immediate user engagement (e.g. click probabilities) without incorporating the downstream effect of recommendation policies on metrics of long-run user engagement. This includes (contextual) multi-arm bandit approaches \citep{li2010contextual}, although bandit approaches allow for learning user preferences by intelligent exploration thereby combating the feedback loops from data generated by existing recommender systems \citep{wang2017factorization,qin2014contextual,zeng2016online, zhu2023scalable, Guo2023evaluate}.
Given this and the sequential nature of user interactions, reinforcement learning techniques have emerged as natural alternatives to grapple with the problem of myopic recommendation strategies more generally \citep{shani2005mdp,chen2019top,ie2019slateq,xin2020self,zhao2019deep}. Since RL methods aim to optimize for long-run ``value'' (estimate of accrued reward over a long time horizon), the resulting recommendation strategies account for the efficacy of downstream recommendations in recurring user interactions. Although many different approaches have been proposed, including model-based RL methods \citep{chen2019generative}, a lot of recent work has focused on applying model free RL algorithms to maximize long term measures of user engagement in different applications; see \citep{zou2019reinforcement, zheng2018drn, zhu2023deep} for examples in e-commerce setting and a personalizing news article recommendation setting. Similar to these efforts, we train a deep neural network model to approximate a \textit{Q-function} of an auction based recommender policy. However, our focus is on incorporating long-term metrics in an auction-based recommender system. For this, we restrict ourselves to biasing the recommendation policy toward actions with higher Q-values, as opposed to searching for an optimal policy. We use simple ideas from classical dynamic programming theory to elucidate the theoretical motivation of our \textit{one-step policy improvement approach} -- modifying the base policy in this way leads to a policy with improved long-run user engagement metrics.

\section{Problem Formulation}
\label{sec: problem formulation}
In this section, we formalize the long-term value optimization problem of a recommender system in the framework of reinforcement learning. We later describe many approximations we make for practical implementation of a large scale auction based system that is designed around this framework and discuss the limitations of our choices. A general reinforcement learning framework can be characterized by an agent and its interactions with the environment \cite{lu2021reinforcement, hutter2007universal}


\subsection{A model of the environment.}
Following \cite{lu2021reinforcement}, we model the environment by a tuple $\Ec = (\Oc, \Ac, \rho)$, where $\Oc$ and $\Ac$ denote the observation set and action set respectively and $\rho$ prescribes a probability for any observation $o \in \Oc$. For simplicity, we assume that time is discrete and take one time period to denote a 24 hour window for the model we implement. We assume that the set of users interacting with the recommender system is dynamic and denote $\Nc(t)$ to be the set of interacting users at time $t$. To avoid overload of notation, we take the set of all users to be fixed, denoted by $\Nc$; our framework as well as implementation can easily account for new users.

\begin{enumerate}
    \item Action space: We take $\Ac$ to be a fixed\footnote{A fixed action space is assumed to only simplify the exposition in terms of notation.} finite set of all possible items that can be recommended to any user at a given time and let $a \in \Ac$ denote one such recommendation. At time step $t$, an action set $A_t \in \Ac^{|\Nc(t)|}$ is a collection of recommended items to different users, $A_t = \left( a^u_t: u \in \Nc(t) \right)$.
    \item Observation space: We take $\Oc$ to be a finite set of observations about a user, with each observation $o \in \Oc$ encoding useful information about user tastes. The observation set, $O_t \in \Oc^{|\Nc(t)|}$ is a collection of such observations about each user interacting with the system, $O_t = (o^u_t: u \in \Nc(t))$.
    \item Outcome space: We take $\Yc$ to be a finite set of outcomes, with each outcome $y \in \Yc$ indicating a user's response to a recommendation. Similar to the action and observation sets, the outcome set, $Y_t = (y^u_t: u \in \Nc(t)) \in \Yc^{|\Nc(t)|}$ denotes the collection of outcomes for each user interacting with the system at time $t$. We later describe how $Y_t$ depends on various quantities, including the action and observation sets, $A_t$ and $O_t$.
    \item Observation probability: We let history $H_t = (O_0, A_0, Y_0, \ldots, A_{t-1}, Y_{t-1}, O_t)$ denote interactions with the environment up to time $t$. An agent selects actions $A_t$ given the history $H_t$. The probability distribution $\rho(\cdot | H_t, A_t)$ determines the next observation conditioned on $H_t$ and $A_t$. While we leave the exact form of $\rho$ unspecified, $\rho(\cdot)$ essentially models the evolution of user tastes as well as their propensity to interact with the system. 
\end{enumerate}

\subsection{A model of user behavior}
\label{subsec: user behavior}
Given the environment interface as described above, we model the user response $y^u_t$ to a recommended item $a^u_t$ at time $t$ as a function of its past interactions with the system\footnote{History of a user can be written as $h^u_t = \{(o^u_k, a^u_k): k \in \Tc(u)\}$ where $\Tc(u)$ is the set of interaction times of user u with the system up to time $t$. This can be derived from $H_t$ which logs the collection of observation, action sets up to time $t$.} $H_t$, the recommended item $a^u_t$, user context $x^u_t$ which encodes demographic information about the user, as well as some side information $i^u_t$ which encodes dynamic user interests summarized by their interaction with the platform, beyond engagement with the recommender system\footnote{For example, this side information can be derived from organic posts on the platform that users engage with.}. Formally, one can take $g(\cdot)$ to be some unknown fixed function such that, $y^u_t = g(H_t, a^u_t, x^u_t, i^u_t, \epsilon_t)$
where $\epsilon_t$ is i.i.d noise capturing idiosyncratic randomness in user behavior. Throughout we assume that the user contexts as well as side information $(x_u, i_u)$ for each user $u \in \Nc$ evolves independently of other users with time. Essentially, this translates to assuming that there are no \textit{network} effects.

\paragraph{User state representation:} Given our model of the user behavior, we take a user state to include all information about a user available to the recommender system at the time period. In particular, we assume vector encoding $z^u_t = f(H_t, u)$ to summarize a user's interests based on their past interactions with the recommender system. Taken together with the user context vector $x^u_t$ and the side information encoding $i^u_t$, the user state can be parsimoniously represented as the tuple $s^u_t = (z^u_t, x^u_t, i^u_t)$. It is noteworthy that assuming no network effects, the user states are independent of each other, i.e. $s^{u_1} \perp s^{u_2}$ for any pair of users $(u_1, u_2)$. Put differently, a user's state only depends on its past interactions with the recommender system - their context and interests evolve exogenously. Throughout the paper, we denote $s$ to be any generic user state and $\Sc$ to denote the state space (collection of all users states). For simplicity of exposition, we will assume $\Sc$ to be countable.

\subsection{Agent design.}
Besides the environment, a key component of an RL framework is an agent which takes actions given observations from the environment. We let $\pi(\cdot|H_t)$ denote a \textit{policy}, a probability distribution over actions that depends on the history $H_t$. 
\begin{enumerate}
    \item Agent state: From a computational viewpoint, the dependence of a policy on the entire history is problematic. We therefore take $S_t = \{ s^u_t: u \in N(t) \}$ to represent agent ``state'', as a collection of state representations for all users interacting with the system at time $t$. Note that the agent state $S_t$ encodes all relevant information in history $H_t$, including the observation $O_t$. 
    We can write the relationship between observation, states and history as:
    \[
        H_t \to S_t \to O_{t+1} \to S_{t+1} \,\, \text{or equivalently}, \,\, H_{t-1} \to S_{t-1} \to O_t \to S_t \to O_{t+1} \to S_{t+1}.
    \]
    We use the notation $X \to Y \to Z$, which is standard in information theory, to denote that random variables $X$ and $Z$ are independent conditioned on Y. Clearly, the state variable obeys the Markov property, 
    \[
    \Prob(S_{t+1} = \cdot \, | \, S_t, A_t, \cdots, S_0, A_0) = \Prob(S_{t+1} = \cdot \, | \, S_t, A_t),
    \]
    since conditioned on $S_t$ and $A_t$, $S_{t+1}$ is independent of the history $H_t$ and no network effects are assumed. 
    \item Reward function: We denote $r: \Ac \times \Yc \mapsto \{0, 1\}$ to be a reward function which maps an (action, outcome) pair to a scalar, encoding the agent's preferences about outcomes. In our setting, rewards are taken to be binary and $r(a,y) = 1$ indicates ``conversion''. Essentially, we associate an ``end behavior'' for each recommended item $a$. For example, buying might be the end behavior for some recommendations while getting users to subscribe might be the end behavior for others. User behavior (as indicated by the outcome variable) corresponding to the end behavior for an ad results in a reward of 1; all other user behaviors incur a reward of 0. For action, observation set $(A_t, Y_t)$, we assume rewards of individual users are additive and take $R(A_t, Y_t) = \sum_{u \in \Nc(t)} r(a^u_t, y^u_t)$ to be the scalar reward. 
    \item Constraint function: It is common for recommender systems to have constraints on which items can be shown to a user at any given time. We denote $c: \Ac \times \Sc \mapsto \bar{\Ac}$ with $\bar{\Ac} \subseteq \Ac$ to be such a constraint function which restricts the feasible action set to a subset of all possible actions, augmented with an option of ``no recommendation''.
\end{enumerate}

\subsection{Reinforcement learning of recommendation policies}
While the reward function specifies the agent's preferences over a single time step, the goal in reinforcement learning is to optimize for cumulative rewards over long horizons. As is typical, we take a discounting approach to model cumulative rewards with the goal to maximize expected cumulative discounted return,
\begin{equation}
\label{eq: expected cumulative discounted return}
\E \Big[ \sum_{t=0}^{\infty} \gamma^t R(A_t, Y_t) \Big]
\end{equation}
Thus, a good recommendation policy not only optimizes for immediate rewards but also for long run conversions by transitioning users to states with a higher propensity to convert.

As the agent state encodes all relevant information about the history, we denote the agent policy $\pi_{\text{agent}}(\cdot | S)$ to be a distribution over actions given the agent state. Furthermore, our assumption on no network effects (independence of user states across time) and additive reward functions lets us decompose the policy as
\[
\pi_{\text{agent}}(\cdot \, | \, S_t) = \prod_{u \in \Nc(t)} \pi(\cdot \, | \, s^u_t) 
\]
where $\pi(\cdot | s)$ indicates the action selection policy given any generic user state $s$. 
To compare different policies as well as to find the optimal one, reinforcement learning algorithms use value functions. Formally, the value function for policy $\pi$ can be defined as
\begin{equation}
\label{eq: value function}
V_{\pi}(s) = \E_{\pi} \Big[ \sum_{t=0}^{\infty} \gamma^t r(a_t, y_t) \, | \, s_0 = s \Big]
\end{equation}
which measures the cumulative reward under policy $\pi$, from a given user state $s$. Similarly, we can also define the state-action value function as the expected return under policy $\pi$ from a user state $s$ and recommendation $a$,
\begin{equation}
\label{eq: Q funtion}
    Q_{\pi}(s, a) = \E_{\pi} \Big[\sum_{t=0}^{\infty} \gamma^t r(a_t, y_t) \, | \, a_0=a, s_0=s \Big]
\end{equation}
State-action value functions obey a fixed point equation known as the Bellman equation, and can equivalently be expressed as,
\begin{equation}
    \label{eq: bellman equation}
    Q_{\pi}(s,a) = r(a,y) + \gamma \sum_{s' \in \Sc} P(s'|s,a) V_{\pi}(s')
\end{equation}
where $\Sc$ denotes the set of all states and $\{s'\} \in \Sc$ are the set of successor states from the state-action pair $(s,a)$. An optimal policy $\pi^*$, is defined to be the one which maximizes the value-to-go from each state, $\pi^*(s) = \argmax_{\pi} V_{\pi}(s)$. The ultimate goal in RL is to find the optimal policy which maximizes the expected cumulative return from any user state. However, we take a less ambitious approach of performing a one-step policy improvement over a given base policy. We outline this below.

\section{A policy improvement approach}
\label{sec: pi approach}
One of our motivation to carefully understand performance of a policy obtained by a single policy improvement step is tied to business constraints. Auction-based recommender systems are the main revenue generating work-streams of advertisement platforms and therefore it is reasonable to not deviate too much from the given base policy without exhaustive experimentation. Moreover, we are also constrained by working with real user interaction data - we do not have access to a simulator of user behavior that we can repeatedly query to assess the performance of a policy trained  with reinforcement learning. Essentially, from our logged data $\{(O_t, A_t, Y_t)\}_{t \in \mathbb{N}}$, we construct trajectories for each user,
\[
\Dc_{\pibase} = \{(o^u_t, s^u_t, a^u_t, r^u_t)\}_{u \in \Nc, \, t \in \Tc(u)}
\]
where $\Tc(u)$ denotes the set of interaction times of user $u$ with the system, accounting for the fact that not all users choose to interact with the system at all times. Here, the subscript $\pibase$ emphasizes that this interaction data is collected using the (current) base policy. We give details about the base policy in the subsection below.

A standard policy iteration step involves approximating the Q-function of the base policy, $Q_{\pibase}$ using a RL algorithm and subsequently solving the single period optimization problem,
\begin{equation}
\label{eq: policy improvement}
    \pi^{+}(s) = \argmax_{a \in \Ac} \, Q_{\pibase}(s, a)
\end{equation}
Notice from \eqref{eq: bellman equation} that for any policy $\pi$, $Q_{\pi}$ is parameterized by the immediate action which optimizes for the one step reward $r(a,y)$ and transition to the next state $s'$. Therefore, the optimization problem in \eqref{eq: policy improvement} is sometimes referred to as a ``single period'' optimization problem. A well known result in dynamic programming shows that a policy $\pi^{+}$ which simultaneously solves the single step policy iteration problem for all states is an improved policy \citep{bertsekas2012dynamic}. That is,
\[
V_{\pi^{+}}(s) \geq V_{\pibase}(s) \,\,\, \forall \, s \in \Sc
\]
For a problem with finite state and action spaces, a series of policy improvement steps ultimately lead to an optimal policy geometrically fast \citep{bertsekas2012dynamic}. While we can technically obtain an improved policy over the base recommender policy using \eqref{eq: policy improvement}, we are bound by business constraints of staying close to the base policy. Nevertheless, we show that it is possible to find an improved policy by taking a step in the policy improvement direction.

\subsection{Baseline recommender policy with auction mechanism}
The base policy, $\pibase$ that we consider is a second price auction based policy. Essentially, for a given user state $s$, bidders provide a ``bid'' value $\text{Bid}(s,a)$ for a subset of eligible recommendations $a \in \Ac^{b} \subset \bar{\Ac}$ which represents an estimate of their utility of recommending item $a$ in user state $s$. As is common in many practical instantiations of recommender systems, an expected conversion rate (``eCVR'') factor is multiplied to the bid values to upper bound the expected revenue of showing an ad in user state $s$. We take $f(s,a) = \text{Bid}(s,a) \times \text{eCVR}(s,a)$ to denote the ``bid-eCVR'' values. Without going too much into details, one can take the deterministic function $f(\cdot)$ to be a black-box recommendation scoring model. Note however that this recommendation scoring model is not trained for optimizing conversions over long time horizons; rather it is trained to be a good predictor of short run engagement of the users with different items (it takes immediate conversion rates as input). Another common practice in auction-based recommender systems is that each bidder typically only bids on a unique subset of recommendations associated with campaigns. Therefore, it is reasonable to assume that $\Ac^{b_1} \cap \Ac^{b_2} = \emptyset$ for any two bidders $b_1$ and $b_2$. 

In this work, we take the agent's base policy $\pibase$ to be greedy with respect to the bid-eCVR values,
\begin{align}
\label{eq: greedy base policy}
    \pibase(s) = \argmax_{a \in \bar{\Ac}} \, f(s,a) 
\end{align}
where $\bar{\Ac}$ is the set of all eligible recommendations. Note that our description of the base policy in \eqref{eq: greedy base policy} is not arbitrary; it is a good approximation to many practical auction based recommender systems. The auction mechanism is critical to recommender systems and is commonly used across various monetization products and online marketplaces. Moreover, platforms invest a lot of engineering resources to build, maintain and fine-tune conversion rate models. This limits us from re-inventing the wheel from scratch. Instead, we treat $f(\cdot)$ as a fixed black-box input to the baseline policy and aim to find an improved policy (in terms of cumulative ``conversions'' as defined above) by perturbing $f(\cdot)$ with an estimate of the expected long run conversions of recommending item $a$ in user state $s$, $Q_{\pibase}(s,a)$.

\subsection{Optimizing for long-run conversions: a one-step improved policy} \label{ltv_weighted_sum}
Our proposed solution is to bias the recommendation scoring model $f$ to account for long run conversions. Specifically, let $Q_{\pibase}$ denote the state-action value function of the base policy, with $Q_{\pibase}(s,a)$ denoting the expected cumulative conversions attributable to recommending item $a$ in user state $s$ and following the base policy thereafter. We take a modified policy to recommend items based on a weighted score, which is a convex combination of the recommendation score and estimated long run conversions,
\begin{equation}
    \label{eq: modified policy}
    \pi_{\text{mod}}(s) = \text{SELECT}_{a \in \bar{\Ac}} \, \left( f(s,a), \hat{Q}_{\pibase}(s,a) \right)
\end{equation}
where we define the selection operator as 
\begin{align*}
\text{SELECT}_{a \in \bar{\Ac}} \, &\left(f(s,a), \hat{Q}_{\pibase}(s,a)\right) := \argmax_{a \in \bar{\Ac}} \, \left[(1-\alpha)\cdot f(s,a) + \alpha \cdot \hat{Q}_{\pibase}(s,a)\right]
\end{align*}
for a fixed value of $\alpha$. See figure \ref{fig: RL_mod_auction} for an illustration of our approach. Note that when $\alpha=1$, $\bar{\Ac} = \Ac$ and exact Q values are available,
\[
\pi_{\text{mod}}(s) = \pi^{+}(s) = \argmax_{a \in \Ac} \, Q_{\pibase}(s, a)
\]
and the modified policy $\pi_{\text{mod}}$ mimics the policy iteration step. When $\alpha=0$, $\pi_{\text{mod}}$ equals the base policy which is greedy with respect to the recommendation scoring model $f$. Values of $\alpha \in (0, 1)$ interpolate between the two policies, possibly in a non-linear way and fixing a value of $\alpha$ pins down the trade-off between optimizing short run engagement versus optimizing for long run conversions. In practice, we use a value of $\alpha=0.96$ which was chosen empirically after a few A/B test iterations, taking into account the business constraint of staying close to the base policy.
We discuss more about the choice of $\alpha$ in section \ref{sec: experiments}.

\begin{figure}[ht]
  \centering
  \includegraphics[width=0.75\linewidth]{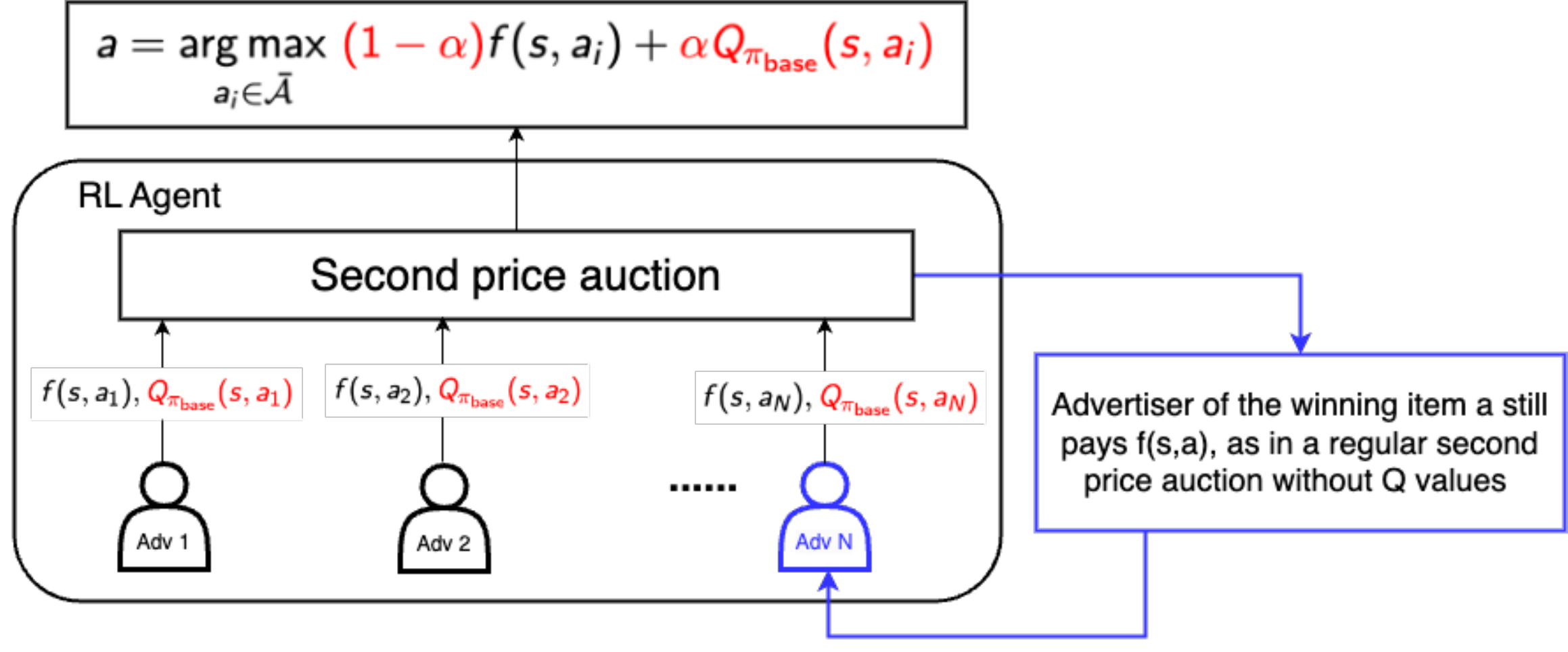}
  \caption{Visualization of an RL approach to improve auction-based recommender policies. We optimize a weighted combination of recommender score $f(s,a)$, which is used for item selection in a second price auction, and estimated long run conversions $\hat{Q}_{\pibase}(s,a)$.}
    \label{fig: RL_mod_auction}
\end{figure}


Next, we show how the modified policy in \eqref{eq: modified policy} with exact Q values yields an improved policy, $V_{\pi_{\text{mod}}}(s) \geq V_{\pibase}(s)$ for all user states $s$. That is, the expected cumulative conversions under the modified policy from each user state improves as compared to the base policy, with possible strict improvements from a subset of user states. This result is not unexpected as the base policy is not optimized for long run conversions. In comparison, the modified policy takes into account expected cumulative conversions over long horizons, at least partially. Nevertheless, this result acts as a sanity check and serves as a motivation for our approach. Recall the definition of value function from \eqref{eq: value function} and the functional form of the modified policy given in \eqref{eq: modified policy}. For our results, we assume that the function $f$ takes unique values for different actions in each user state, i.e. $f(s,a_i) \neq f(s,a_j)$ for any $(a_i, a_j) \in \bar{\Ac}$. This is reasonable as we expect bid-eCVR values for different recommendations to be non-unique. In addition, we also assume that we have access to exact Q values of the base policy, $Q_{\pibase}$.

\begin{prop}[Policy improvement step]
\label{prop: policy improvement}
For any user state $s$, $V_{\pi_{\text{mod}}}(s) \geq V_{\pibase}(s)$. Moreover, $V_{\pi_{\text{mod}}}(s) > V_{\pibase}(s)$ if $\pi_{\text{mod}}(s) \neq \pibase(s)$. That is, the inequality is strict if recommendations under the modified and base policies differ in a user state. 
\end{prop}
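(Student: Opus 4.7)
The plan is to invoke the classical policy improvement lemma of dynamic programming, with the only non-standard twist being that $\pi_{\text{mod}}$ is not the greedy policy with respect to $Q_{\pibase}$ but the greedy policy with respect to a convex combination of $Q_{\pibase}$ and the auction score $f$. The proof will therefore proceed in two steps: first a per-state inequality showing that the action chosen by $\pi_{\text{mod}}$ has a $Q_{\pibase}$-value at least as large as $V_{\pibase}(s)$, and then a standard telescoping/unrolling of the Bellman equation to lift this into the claimed value-function inequality.

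For Step 1, fix a state $s$ and let $a_0 = \pibase(s)$ and $a^\star = \pi_{\text{mod}}(s)$. If $a^\star = a_0$ there is nothing to prove, since $Q_{\pibase}(s, a^\star) = Q_{\pibase}(s, \pibase(s)) = V_{\pibase}(s)$. Otherwise, the definition of the SELECT operator gives
\[
(1-\alpha)\,f(s, a^\star) + \alpha\, Q_{\pibase}(s, a^\star) \;\geq\; (1-\alpha)\,f(s, a_0) + \alpha\, Q_{\pibase}(s, a_0).
\]
By the uniqueness assumption on $f(s,\cdot)$ and the fact that $a_0$ is its argmax, $f(s,a_0) > f(s,a^\star)$, so rearranging yields $\alpha\,[Q_{\pibase}(s,a^\star) - Q_{\pibase}(s,a_0)] \geq (1-\alpha)[f(s,a_0) - f(s,a^\star)] > 0$ whenever $\alpha \in (0,1]$. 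Hence $Q_{\pibase}(s, \pi_{\text{mod}}(s)) > V_{\pibase}(s)$ whenever $\pi_{\text{mod}}(s) \neq \pibase(s)$, and $Q_{\pibase}(s, \pi_{\text{mod}}(s)) \geq V_{\pibase}(s)$ in all cases.

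For Step 2, start from $V_{\pibase}(s) \leq Q_{\pibase}(s, \pi_{\text{mod}}(s))$ and expand the right-hand side with the Bellman equation (\ref{eq: bellman equation}):
\[
V_{\pibase}(s) \;\leq\; r(s, \pi_{\text{mod}}(s)) + \gamma \sum_{s'} P(s' \mid s, \pi_{\text{mod}}(s))\, V_{\pibase}(s').
\]
Applying the per-state inequality again to each $V_{\pibase}(s')$ on the right, iterating, and using $\gamma < 1$ to control the tail, the infinite unrolling converges to $\E_{\pi_{\text{mod}}}[\sum_t \gamma^t r(a_t,y_t) \mid s_0 = s] = V_{\pi_{\text{mod}}}(s)$. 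This gives $V_{\pibase}(s) \leq V_{\pi_{\text{mod}}}(s)$. The strict version follows by noting that if $\pi_{\text{mod}}(s) \neq \pibase(s)$ at the initial state $s$, Step 1 already produces a strict gap at the first term of the telescoping, and subsequent terms only add non-negative contributions.

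The only substantive step is Step 1; the crux is recognizing that deviating from the $f$-greedy action inside a convex combination can only happen if the $Q_{\pibase}$-gain strictly outweighs the $f$-loss, which is exactly what delivers the policy-improvement inequality. Step 2 is the textbook argument (e.g., Bertsekas) and requires no new ideas beyond applying contraction of the Bellman operator or, equivalently, a dominated-convergence/geometric-series bound on the discounted remainder.
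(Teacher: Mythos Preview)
Your proposal is correct and follows essentially the same approach as the paper: your Step~1 is exactly the paper's Lemma~\ref{lem: action improv}, and your Step~2 (unrolling the Bellman equation and controlling the discounted tail) is the concrete form of the paper's argument via monotonicity of the operator $T_{\pi_{\text{mod}}}$ and the identity $V_{\pi_{\text{mod}}} = \lim_{k\to\infty} T^k_{\pi_{\text{mod}}} V_{\pibase}$. One small caveat: your strict-inequality conclusion in Step~1 relies on $(1-\alpha)[f(s,a_0)-f(s,a^\star)] > 0$, which needs $\alpha < 1$ (not $\alpha \in (0,1]$ as you wrote); the paper implicitly works with $\alpha \in (0,1)$ as well.
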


The proof follows by a straightforward application of the monotonicity property of the Bellman operators \citep{bertsekas2012dynamic} along with the following lemma. Note that $Q_{\pibase}(s, \pi_{\text{mod}}(s))$ denotes the expected cumulative conversions using the modified policy in user state $s$ and following the base policy thereafter.

\begin{lem}
\label{lem: action improv}
For any user state $s$, $Q_{\pibase}(s, \pi_{\text{mod}}(s)) \geq V_{\pibase}(s)$.
\end{lem}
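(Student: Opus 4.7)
The plan is to unpack the two argmax definitions and observe that they pull in compatible directions. Since $\pibase$ is deterministic (greedy in $f$), we have $V_{\pibase}(s) = Q_{\pibase}(s, \pibase(s))$ by equation \eqref{eq: bellman equation} and the definition of the value function. So the lemma reduces to showing $Q_{\pibase}(s, \pi_{\text{mod}}(s)) \geq Q_{\pibase}(s, \pibase(s))$, i.e.\ that the action selected by $\pi_{\text{mod}}$ is at least as good as $\pibase$'s action under the base $Q$-function.

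Let $a^\star = \pi_{\text{mod}}(s)$ and $a_0 = \pibase(s)$. Assume first $\alpha \in (0, 1]$. By the definition of $\pi_{\text{mod}}$ in \eqref{eq: modified policy},
\begin{equation*}
(1-\alpha) f(s, a^\star) + \alpha\, Q_{\pibase}(s, a^\star) \;\geq\; (1-\alpha) f(s, a_0) + \alpha\, Q_{\pibase}(s, a_0),
\end{equation*}
which rearranges to
\begin{equation*}
\alpha \bigl[ Q_{\pibase}(s, a^\star) - Q_{\pibase}(s, a_0) \bigr] \;\geq\; (1-\alpha) \bigl[ f(s, a_0) - f(s, a^\star) \bigr].
\end{equation*}
By the definition of $\pibase$ in \eqref{eq: greedy base policy}, $f(s, a_0) \geq f(s, a^\star)$, so the right-hand side is nonnegative. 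Dividing by $\alpha > 0$ yields $Q_{\pibase}(s, a^\star) \geq Q_{\pibase}(s, a_0) = V_{\pibase}(s)$, as required. The corner case $\alpha = 0$ is trivial: then $\pi_{\text{mod}}(s) = \pibase(s)$ and equality holds.

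There is no real obstacle here; the argument is essentially a one-line rearrangement. The only things to be careful about are (i) using that $\pibase$ is deterministic so that $V_{\pibase}(s)$ coincides with $Q_{\pibase}(s,\pibase(s))$, and (ii) noting that both $\pibase$ and $\pi_{\text{mod}}$ optimize over the same feasible set $\bar{\Ac}$, so $a_0$ is admissible when comparing objective values at $a^\star$. Combined with monotonicity of the Bellman operator, this lemma is exactly what is needed to deduce Proposition 1.
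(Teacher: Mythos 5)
Your proof is correct and follows essentially the same route as the paper: unpack the argmax defining $\pi_{\text{mod}}$, compare its objective at $\pibase(s)$, rearrange the convex-combination inequality, use that $\pibase(s)$ maximizes $f$ so the $f$-difference is nonnegative, and conclude via $Q_{\pi}(s,\pi(s)) = V_{\pi}(s)$. The only difference is cosmetic: the paper case-splits on whether the two actions coincide and invokes the uniqueness of $f$-values to additionally obtain a \emph{strict} inequality when $\pi_{\text{mod}}(s) \neq \pibase(s)$ (needed for the ``moreover'' clause of Proposition \ref{prop: policy improvement}), whereas your streamlined argument establishes the weak inequality of the lemma without that assumption and handles $\alpha = 0$ explicitly.
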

\begin{proof}
    Consider a fixed state $s$ and let $a_{\text{base}}$, $a_{\text{mod}}$ denote the selected actions under the base and modified policy, with the same eligible\footnote{While comparing action selections under the base and modified policy, we assume the eligible action set remains the same. This seems reasonable to compare the outcome of two policies. Note that the constraint function (which specifies the eligible item set) is part of agent design and is separate from learning of recommendation policies.} item set $\bar{\Ac}$
    \begin{align}
        a_{\text{base}} &= \pibase(s) = \argmax_{a \in \bar{\Ac}} \, f(s,a) \label{eq: base action} \\ 
        a_{\text{mod}} &= \pi_{\text{mod}}(s) = \argmax_{a \in \bar{\Ac}} \, \left[ (1-\alpha)f(s,a) + \alpha Q_{\pibase}(s,a) \right] \label{eq: modified action}
    \end{align}
    Then, we show that, $Q_{\pibase}(s, a_{\text{mod}}) \geq Q_{\pibase}(s, a_{\text{base}})$. Clearly, this holds when $a_{\text{mod}} = a_{\text{base}}$. Let's look at the case when the two actions are different. In this case, first note that $f(s, a_{\text{base}}) > f(s, a_{\text{mod}})$ using the fact that $a_{\text{base}} = \argmax_{a \in \bar{\Ac}} \, f(s,a)$ and that $f(\cdot)$ takes on unique values. However, from \eqref{eq: modified action}, it holds that
    \begin{align*}
        (1-\alpha)f(s,a_{\text{mod}}) + \alpha Q_{\pibase}&(s,a_{\text{mod}}) \geq (1-\alpha)f(s,a_{\text{base}}) + \alpha Q_{\pibase}(s,a_{\text{base}}).
    \end{align*}
    Rearranging, we get that 
    \begin{align*}
        \alpha ( Q_{\pibase}(s, a_{\text{mod}}) - Q_{\pibase}&(s, a_{\text{base}}) ) \geq (1-\alpha) ( f(s,a_{\text{base}}) - f(s,a_{\text{mod}}) ) > 0,
    \end{align*}
    where the last inequality follows as $f(s,a_{\text{base}}) > f(s,a_{\text{mod}})$. Thus, for the case when $a_{\text{mod}}\neq a_{\text{base}}$, we have 
    \begin{equation}
    \label{eq: higher q function under modified policy}
    Q_{\pibase}(s, a_{\text{mod}}) > Q_{\pibase}(s, a_{\text{base}}).
    \end{equation}
    Combining both the cases, we conclude that $Q_{\pibase}(s,a_{\text{mod}}) \geq Q_{\pibase}(s,a_{\text{base}})$. Note that as both $\pibase$ and $\pi_{\text{mod}}$ are deterministic policies, we get our final result,
    \begin{align*}
        Q_{\pibase}(s, \pi_{\text{mod}}(s)) = Q_{\pibase}(s, a_{\text{mod}}) \geq Q_{\pibase}(s, a_{\text{base}}) = Q_{\pi_{\text{base}}}(s, \pi_{\text{base}}(s)) = V_{\pibase}(s).
    \end{align*}
    Here the final equality uses that $Q_{\pi}(s, \pi(s)) = V_{\pi}(s)$. This relationship is well known in RL (see for example \citep{bertsekas2012dynamic}) and can also be derived from equations \eqref{eq: value function} and \eqref{eq: bellman equation} above.
    Note that equation \eqref{eq: higher q function under modified policy} implies a strict inequality, $Q_{\pibase}(s, \pi_{\text{mod}}(s)) > V_{\pibase}(s)$, when the modified policy makes a different recommendation than the base policy.
\end{proof}

\paragraph{Proof of Proposition \ref{prop: policy improvement}:}
We use the result in Lemma \ref{lem: action improv} to show how the modified policy gives an improvement over the base policy. This essentially follows from using the standard monotonicity properties of the Bellman operator \citep{bertsekas2012dynamic}. Note that as per step rewards are bounded in our setting, $|r(a,y)| \leq 1$, value functions corresponding to any stationary policy $\pi$ are also bounded. Let $\Vc = \{V_{\pi}: \pi \in \Pi\}$ be the set of all bounded value functions corresponding to stationary policies $\pi \in \Pi$. The bellman operator $T_{\mu}: \Vc \mapsto \Vc$ for some policy $\mu \in \Pi$ maps a value functions $V \in \Vc$ as 
\[
T_{\mu} V(s) := r(\mu(s), y) + \gamma \sum_{s' \in \Sc} P(s'|s, \mu(s)) V(s') \,\,\,\, \forall \, s \in \Sc.
\]
Using this and the definition of Q function in Equation \eqref{eq: bellman equation}, it is easy to check that for any two stationary policies $\pi, \pi'$,
\begin{equation}
\label{eq: q function with bellman op}
    Q_{\pi}(s, \pi'(s)) = T_{\pi'}V_{\pi}(s)
\end{equation}
With bounded per-step rewards (our setting), Bellman operators are monotone \citep{bertsekas2012dynamic}, implying that if $V_{\pi}(s) \geq V_{\pi'}(s)$ for all $s \in \Sc$, then
\begin{equation}
\label{eq: monotonicity bellman op}
    T_{\mu} V_{\pi}(s) \geq T_{\mu} V_{\pi'}(s) \quad \forall \, s \in \Sc.
\end{equation}
Another standard result from dynamic programming theory \citep{bertsekas2012dynamic} helps us prove proposition \ref{prop: policy improvement}.
\begin{equation}
\label{eq: value function as limit of bellman op}
    V_{\mu} = \lim_{k \to \infty}\, T^{k}_{\mu} V_{\pi}
\end{equation}
Essentially, this states that repeated application of the Bellman operator $T_{\mu}(\cdot)$ converges to its corresponding value function $V_{\mu}$. Using \eqref{eq: q function with bellman op}, we can express the result of lemma \ref{lem: action improv} as
\begin{equation}
\label{eq: rewrite lemma 1}
    T_{\pi_{\text{mod}}}V_{\pibase}(s) \geq V_{\pibase}(s) \quad \forall \, s \in \Sc.
\end{equation}
Applying the monotonicity property to \eqref{eq: rewrite lemma 1} gives,
\begin{equation*}
    T^2_{\pi_{\text{mod}}}V_{\pibase}(s) \geq T_{\pi_{\text{mod}}}V_{\pibase}(s) \geq V_{\pibase}(s)
\end{equation*}
Repeatedly applying the monotonicity property shows the desired result,
\begin{align*}
    V_{\pi_{\text{mod}}}(s) = \lim_{k \to \infty} T^k_{\pi_{\text{mod}}} V_{\pibase}(s) \geq \ldots \geq T^2_{\pi_{\text{mod}}}V_{\pibase}(s)
    \geq T_{\pi_{\text{mod}}}V_{\pibase}(s) \geq V_{\pibase}(s).
\end{align*}
It is noteworthy that the inequality is strict for user states where the recommendation actions under $\pi_{\text{mod}}$ differ from those under $\pibase$. Indeed, proof of lemma \ref{lem: action improv} shows that the inequality in \eqref{eq: rewrite lemma 1} is strict for these states.

\subsection{Implementation using SARSA}
To estimate $Q_{\pibase}$, we train a deep neural network based model using SARSA, a popular \textit{on-policy} reinforcement learning algorithm, and logged data of user interactions  $\{(s^u_t, a^u_t, r^u_t)\}_{u \in \Nc, \, t \in \Tc(u)}$. We pre-process the trajectory data of each user into tuples, 
\[
d^u_{t_i} = \left(s^u_{t_i}, a^u_{t_i}, r^u_{t_i}, s^u_{t_{i+1}}, a^u_{t_{i+1}}, \tau_i\right) \,\,\, \forall \, t_i \in \Tc(u) = \left\{t_0, t_1, \ldots, t_{|\Tc(u)|}\right\}
\]
Recall that $\Tc(u)$ denote the set of interaction times of a user $u$ with the recommender system. 
We let $\tau_i= t_{i+1} - t_i$ denote the $i^{\text{th}}$ inter-interaction time and append it to each tuple. Reinforcement learning algorithms like SARSA which do credit assignment by bootstrapping only require training data in the form of these data tuples, unlike Monte Carlo methods which require entire user interaction trajectories. SARSA is a widely used standard reinforcement learning algorithm -- however, for completeness we give some details in Algorithm \ref{alg: SARSA} below.

Let $\Dc = \{d^u_{t_i}\}_{u \in \Nc, \, t_i \in \Tc(u)}$ be the collection of all user interaction tuples. In reinforcement learning, this is usually referred to as the replay buffer. At each training epoch, data tuples are sampled from $\Dc$ to train the SARSA algorithm. 
Note that the user interaction data is collected under the base policy, $\pibase$ and so we can hope to estimate $Q_{\pibase}$ using an \textit{on-policy} algorithm like SARSA. We also train our models \textit{online}, instead of using a fixed offline data set for training, by updating data set $\Dc$ daily with a fresh set of user interaction data. See section \ref{subsec: data processing} for more details on how we construct these data tuples from user trajectories in an online manner. User states which do not have successor interactions are taken to be terminal and we attribute the immediate reward (conversions) to the corresponding (state, action) pair. Also, as is typical in the training of reinforcement learning algorithms with deep neural networks, we use a separate target network for stability \citep{mnih2015human}. We update the target network every $k=100$ training steps of the algorithm.

\begin{algorithm2e}
    \SetAlgoLined
    \SetKwInOut{Input}{Input}
    \SetKwInOut{Output}{Output}
    \nosemic \textbf{Input}: Neural network model $Q_{\theta}$, target network $Q_{\theta^-}$, initial replay buffer $\Dc$, batch-size $N$, step-size sequence $\{\alpha_k\}_{k \in \mathbb{N}}$.\;
	\textbf{Initialize}: $\theta, \theta^- \gets \theta_0$. \\
	\For{$t=0,1,\ldots$}{
		\nosemic Update replay buffer $\Dc_t$ of user transition tuples. \\
		Sample batch $\Bc=\{(s_j,a_j,r_j,s'_j,a'_j,\tau_j)\}_{j=1}^N$ from $\Dc_t$.\;
		Compute targets for each tuple:
		\[ y_j = 
		\begin{cases}
            r_j, & \text{if $s_j$ is a terminal user state} \\
            r_j + \gamma^{\tau_j}Q_{\theta^-}(s'_j,a'_j), & \text{otherwise}
        \end{cases}
        \]
		\hspace{-2.5mm} Compute loss function: $\ell(\theta_t) = \sum_{j=1}^{N} (y_j - Q_{\theta}(s_j, a_j))^2$\;
		Take a semi-gradient step: $\theta_{t+1} \gets \theta_t - \alpha_t \nabla_{\theta} \ell(\theta_t)$\;
        \If{$t \,\, \text{mod} \,\, k == 0$:}{
        Update target network:  $\theta^- \gets \theta_t$;
        }
	}
	\nosemic \textbf{Output}: estimate of state action value function $Q_{\theta}$
	\caption{Q-value estimation using SARSA}
	\label{alg: SARSA}
\end{algorithm2e}

\subsection{Data processing}
\label{subsec: data processing}
Given that recommender systems serves millions of users every day, storing interaction data for each user over long time horizons is impossible, even with distributed storage. Therefore, we take an online training approach -- collecting fresh data every day, integrating it with a fixed length table of recent interaction data for each user, and using it to generate training tuples. To describe our data processing pipeline, we introduce some notation specific to this section. We let $h$ be the \textit{effective time horizon}, specifying the number of days we look back to maintain the table of recent user interactions. Fix a time period $t > h$. Let $B_t = \left( S_{t-h}, A_{t-h}, R_{t-h}, S_{t-h+1}, \ldots, S_t, A_t, R_t \right)$ be a ``buffer'' table which stores the last $h$ days of interaction data in terms of the states, recommended items and rewards. Let $\Nc_h(t)$ be the set of users with interaction data in $B_t$. We process the agent-environment interactions at time $t+1$, $\left(S_{t+1}, A_{t+1}, R_{t+1}\right)$, into data tuples by dividing user interactions in three categories:

\begin{enumerate}[leftmargin=*]
    \item \textit{New user interactions:} Consider a user $u$ which interacts with the system at time $t+1$ but has not interacted with the system in the last $h$ time periods, i.e. $u \notin \Nc_h(t)$. Since no predecessor or successor interactions are associated with this user, we do not use this data to generate a transition tuple. Information about this interaction remains a part of the agent-environment interaction, $(S_{t+1}, A_{t+1}, R_{t+1})$, and is added to the buffer $B_{t+1}$.
    \item \textit{Active user interactions:} Consider a user $u$ which interacts with the system at time $t+1$, and this user has interacted with the system in the recent past, i.e. $u \in \Nc_{h}(t)$. Let $t_u \in \{t-h, t-h+1, \ldots, t\}$ be the most recent interaction time of user $u$ before $t+1$. Then, we take $d^u_{t+1}=(s^u_{t_u}, a^u_{t_u}, r^u_{t_u}, s^u_{t+1}, a^u_{t+1})$ to be a transition data tuple and add it to the replay buffer $\Dc$.
    \item \textit{Inactive user interactions:} Consider a user $u$ which interacted with the system at time $t-h$ but has not interacted with the system since then, including at time $t+1$. We identify such users to be inactive, and take $(s^u_{t-h}, a^u_{t-h})$ to be the terminal state action pair such that the reward $r^u_{t-h}$ is attributed to them. That is, the transition tuple $d^u_{t-h}=(s^u_{t-h}, a^u_{t-h}, r^u_{t-h}, s^*, a^*, \tau=0)$ is added to the replay buffer $\Dc$, where $(s^*, a^*)$ are dummy states and actions such that $Q_{\pibase}(s^*,a^*)=0$.
\end{enumerate}
In some way $h$ can be interpreted as the \textit{effective} time horizon in our implementation. As we work in the discounted setting and take a value of $h=15$ in our experiments, this implicitly implies\footnote{This follows as $(0.8)^{15} \approx 0.03$. We take $\gamma^{15}=0$.} a discount factor of $\gamma \approx 0.8$. Doing data processing this way with a fixed size window is only an approximation to account for very long term effects of recommendations, given data size and memory requirements of a real-time recommender system. We are currently experimenting with a different data processing pipeline architecture which can allow us to take values of $h$ up to 60.

\paragraph{Updating replay buffer and buffer table:} The new data tuples generated from agent-environment interaction $(S_{t+1}, A_{t+1}, R_{t+1})$ are added to the replay buffer $\Dc$. We also update the buffer table $B_{t+1}=(S_{t+1-h}, A_{t+1-h}, R_{t+1-h}, \ldots, S_{t+1}, A_{t+1}, R_{t+1})$ to store the agent environment interactions of past $h$ periods at time $t+1$. This constitutes the ``update replay buffer'' step in Algorithm \ref{alg: SARSA}. See Figure \ref{fig: data processing} below for an illustration of different types of user interaction tuples generated by the agent-environment interaction $(S_{t+1}, A_{t+1}, R_{t+1})$ which are used to update the buffer table $B_{t+1}$.
\vspace{5pt}

\begin{figure}[ht]
  \centering
  \includegraphics[width=0.9\linewidth]{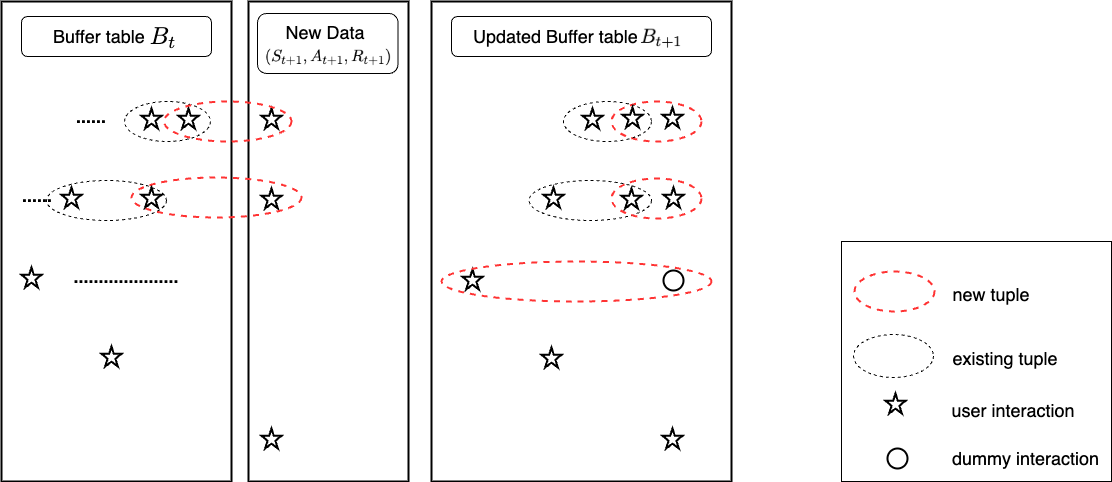}
  \caption{Data processing pipeline with different user interactions. First two rows show how \textit{new data interaction} tuples are created and updated in the buffer table. The third row illustrates \textit{inactive user interaction}, where a dummy state-action is appended to create a data tuple. Fourth row shows that recent interactions are carried over in the updated buffer table $B_{t+1}$ for users who did not interact with the system at time $t+1$. The final row illustrates a \textit{new user interaction}.}
    \label{fig: data processing}
\end{figure}

\vspace{-10pt}
\section{Experiments}
\label{sec: experiments}

\begin{figure*}
\begin{minipage}{.5\linewidth}
\centering
\subfloat{\includegraphics[width=0.95\linewidth]{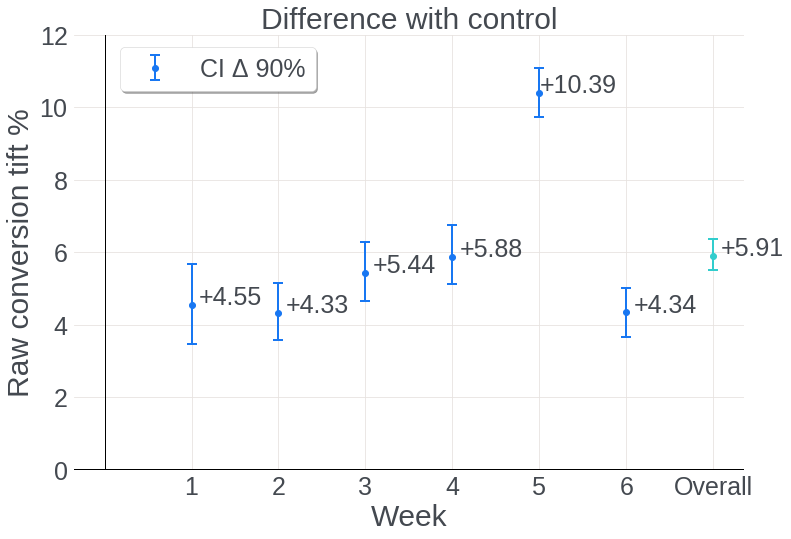}}
\end{minipage}%
\begin{minipage}{.5\linewidth}
\centering
\subfloat{\includegraphics[width=0.95\linewidth]{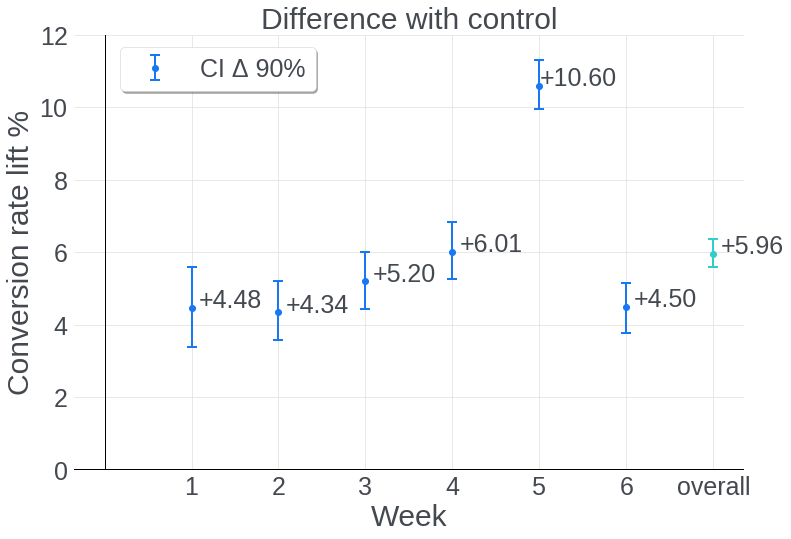}}
\end{minipage}
\caption{Weekly change in average number of conversions and conversion rate over six weeks of experimentation. Note that change in raw conversion numbers closely mimics the change in conversion rates as we tried to maintain similar total impressions over the test and control group. Error bars represent the standard deviation of change across different bidders.}
\label{fig: weekly change}
\par\medskip
\vspace{10pt}
\centering
\subfloat{\includegraphics[height=4cm, width=0.99\linewidth]{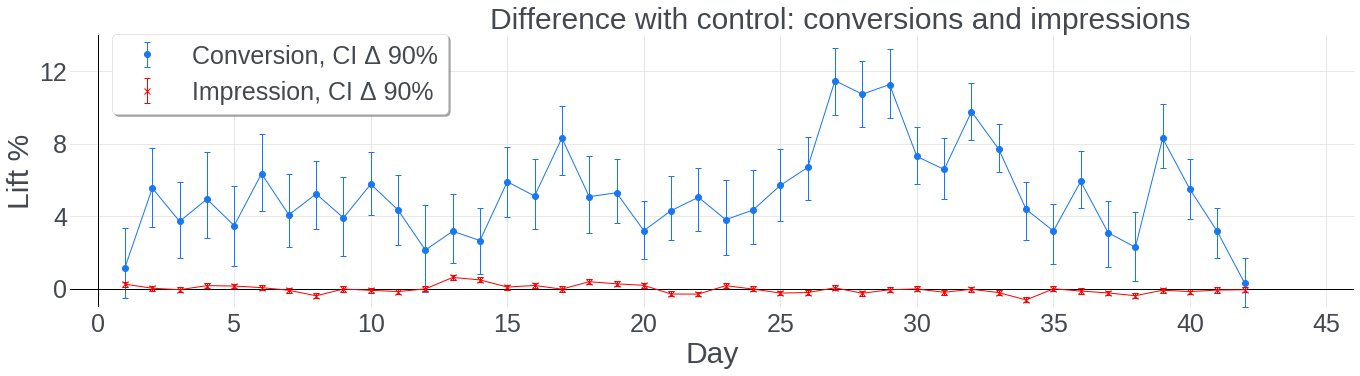}}
\caption{Daily change in conversion and impression numbers over the control group. For most days, our approach (blue curve) results in a significant increase in conversions over the auction based recommender policy. The red curve shows the change in number of daily impressions, which approximately stayed close to 0 for most days.}
\label{fig: daily change}
\end{figure*}
We conduct an online A/B experiment over a period of six weeks to test our proposed RL based approach in a web scale recommender system serving billions of users every day. The base policy in our experiment uses a second price auction mechanism to recommend items to users. On average, about 2 billion users interact with our recommender system every day. For our experiments, we randomly selected 2\% (40 million users) to be a part of our test and control group each. Recall from section \ref{subsec: data processing} that our estimated Q values account for cumulative conversions over a 14 day period. 

To stay close to the base policy, we tune the hyperparameter $\alpha$ introduced in section \ref{ltv_weighted_sum} to make sure that the contribution of estimated cumulative conversions, $\alpha Q_{\pibase}(s,a)$, in the scoring model $(1-\alpha) f(s,a) + \alpha Q_{\pibase}(s,a)$ is less than 8\%. That is, we make sure that the fraction $\alpha Q_{\pibase}(s,a)/\left( (1-\alpha) f(s,a) + \alpha Q_{\pibase}(s,a) \right) \leq 0.08$. After hyperparameter tuning or our experiments, this approximately came to a value of $\alpha=0.96$.




\paragraph{Implementation details:} As described in section \ref{sec: problem formulation}, we model the user state as a tuple of three features, $s^u = (z^u, x^u, i^u)$. The context features $x^u$ which encode user demographics are extracted respecting privacy constraints. A sequence model is used to generate features $z^u$ which summarizes the user interests by encoding their past interactions with the system while graph based models pretrained on a large-scale internal dataset (which goes beyond interactions of users with the recommender system) are used to extract features $i^u$ encoding user preferences. The item embeddings, denoted by $a$, include textual and image representations along with its content category. We use a fully connected deep neural network to model the Q-values $Q_{\pibase}(s^u, a)$, by taking state and action representations as inputs. We retrain our models every 12 hours to adapt to data distribution shifts and maintain a replay buffer of approximately 300 million data tuples at anytime to train our models using SARSA.

\paragraph{Experimental results:} Figures \ref{fig: weekly change} and \ref{fig: daily change} summarize the results of our A/B test. In Figure \ref{fig: weekly change}, we aggregate results week-by-week and summarize the overall improvement for both the number of conversions as well as the conversion rate. On average we see a lift (change between test and control group) between $4-10\%$ for both metrics which shows a significant improvement over the base policy. Moreover, Fig. \ref{fig: daily change} shows that improvements of our approach hold consistently, over most days. Overall, our results demonstrate the effectiveness of our approach in optimizing for long run metrics. We remark that change in conversions closely mimics the change in conversion rates as we tried to maintain a neutral impression change. This can be seen in Fig. \ref{fig: daily change} where the change in impressions for the test group over control is close to zero for all days.

\section{Conclusion}
In this work, we proposed a reinforcement learning based approach to account for long run user engagement in recommending items to users while still working within the constraints of an auction based recommender system. Foundational theory in dynamic programming motivates our method -- we show that our method provably improves over the baseline policy. An online A/B test conducted with a billion-user scale recommender system demonstrates significant improvements and illustrates the potential of using reinforcement learning ideas for recommender systems at scale in production.

\newpage
\bibliographystyle{ACM-Reference-Format}
\bibliography{LTV_RecSys}


\begin{thebibliography}{42}


\ifx \showCODEN    \undefined \def \showCODEN     #1{\unskip}     \fi
\ifx \showDOI      \undefined \def \showDOI       #1{#1}\fi
\ifx \showISBNx    \undefined \def \showISBNx     #1{\unskip}     \fi
\ifx \showISBNxiii \undefined \def \showISBNxiii  #1{\unskip}     \fi
\ifx \showISSN     \undefined \def \showISSN      #1{\unskip}     \fi
\ifx \showLCCN     \undefined \def \showLCCN      #1{\unskip}     \fi
\ifx \shownote     \undefined \def \shownote      #1{#1}          \fi
\ifx \showarticletitle \undefined \def \showarticletitle #1{#1}   \fi
\ifx \showURL      \undefined \def \showURL       {\relax}        \fi
\providecommand\bibfield[2]{#2}
\providecommand\bibinfo[2]{#2}
\providecommand\natexlab[1]{#1}
\providecommand\showeprint[2][]{arXiv:#2}

\bibitem[Bertsekas(2012)]%
        {bertsekas2012dynamic}
\bibfield{author}{\bibinfo{person}{Dimitri Bertsekas}.}
  \bibinfo{year}{2012}\natexlab{}.
\newblock \bibinfo{booktitle}{\emph{Dynamic programming and optimal control:
  Volume I}}. Vol.~\bibinfo{volume}{1}.
\newblock \bibinfo{publisher}{Athena scientific}.
\newblock


\bibitem[Chen et~al\mbox{.}(2019a)]%
        {chen2019top}
\bibfield{author}{\bibinfo{person}{Minmin Chen}, \bibinfo{person}{Alex Beutel},
  \bibinfo{person}{Paul Covington}, \bibinfo{person}{Sagar Jain},
  \bibinfo{person}{Francois Belletti}, {and} \bibinfo{person}{Ed~H Chi}.}
  \bibinfo{year}{2019}\natexlab{a}.
\newblock \showarticletitle{Top-k off-policy correction for a REINFORCE
  recommender system}. In \bibinfo{booktitle}{\emph{Proceedings of the Twelfth
  ACM International Conference on Web Search and Data Mining}}.
  \bibinfo{pages}{456--464}.
\newblock


\bibitem[Chen et~al\mbox{.}(2022)]%
        {chen2022off}
\bibfield{author}{\bibinfo{person}{Minmin Chen}, \bibinfo{person}{Can Xu},
  \bibinfo{person}{Vince Gatto}, \bibinfo{person}{Devanshu Jain},
  \bibinfo{person}{Aviral Kumar}, {and} \bibinfo{person}{Ed Chi}.}
  \bibinfo{year}{2022}\natexlab{}.
\newblock \showarticletitle{Off-Policy Actor-critic for Recommender Systems}.
  In \bibinfo{booktitle}{\emph{Proceedings of the 16th ACM Conference on
  Recommender Systems}}. \bibinfo{pages}{338--349}.
\newblock


\bibitem[Chen et~al\mbox{.}(2019b)]%
        {chen2019generative}
\bibfield{author}{\bibinfo{person}{Xinshi Chen}, \bibinfo{person}{Shuang Li},
  \bibinfo{person}{Hui Li}, \bibinfo{person}{Shaohua Jiang},
  \bibinfo{person}{Yuan Qi}, {and} \bibinfo{person}{Le Song}.}
  \bibinfo{year}{2019}\natexlab{b}.
\newblock \showarticletitle{Generative adversarial user model for reinforcement
  learning based recommendation system}. In
  \bibinfo{booktitle}{\emph{International Conference on Machine Learning}}.
  PMLR, \bibinfo{pages}{1052--1061}.
\newblock


\bibitem[Cheng et~al\mbox{.}(2016)]%
        {cheng2016wide}
\bibfield{author}{\bibinfo{person}{Heng-Tze Cheng}, \bibinfo{person}{Levent
  Koc}, \bibinfo{person}{Jeremiah Harmsen}, \bibinfo{person}{Tal Shaked},
  \bibinfo{person}{Tushar Chandra}, \bibinfo{person}{Hrishi Aradhye},
  \bibinfo{person}{Glen Anderson}, \bibinfo{person}{Greg Corrado},
  \bibinfo{person}{Wei Chai}, \bibinfo{person}{Mustafa Ispir}, {et~al\mbox{.}}}
  \bibinfo{year}{2016}\natexlab{}.
\newblock \showarticletitle{Wide \& deep learning for recommender systems}. In
  \bibinfo{booktitle}{\emph{Proceedings of the 1st workshop on deep learning
  for recommender systems}}. \bibinfo{pages}{7--10}.
\newblock


\bibitem[Covington et~al\mbox{.}(2016)]%
        {covington2016deep}
\bibfield{author}{\bibinfo{person}{Paul Covington}, \bibinfo{person}{Jay
  Adams}, {and} \bibinfo{person}{Emre Sargin}.}
  \bibinfo{year}{2016}\natexlab{}.
\newblock \showarticletitle{Deep neural networks for youtube recommendations}.
  In \bibinfo{booktitle}{\emph{Proceedings of the 10th ACM conference on
  recommender systems}}. \bibinfo{pages}{191--198}.
\newblock


\bibitem[Edelman et~al\mbox{.}(2007)]%
        {edelman2007internet}
\bibfield{author}{\bibinfo{person}{Benjamin Edelman}, \bibinfo{person}{Michael
  Ostrovsky}, {and} \bibinfo{person}{Michael Schwarz}.}
  \bibinfo{year}{2007}\natexlab{}.
\newblock \showarticletitle{Internet advertising and the generalized
  second-price auction: Selling billions of dollars worth of keywords}.
\newblock \bibinfo{journal}{\emph{American economic review}}
  \bibinfo{volume}{97}, \bibinfo{number}{1} (\bibinfo{year}{2007}),
  \bibinfo{pages}{242--259}.
\newblock


\bibitem[Ekstrand et~al\mbox{.}(2011)]%
        {ekstrand2011collaborative}
\bibfield{author}{\bibinfo{person}{Michael~D Ekstrand}, \bibinfo{person}{John~T
  Riedl}, \bibinfo{person}{Joseph~A Konstan}, {et~al\mbox{.}}}
  \bibinfo{year}{2011}\natexlab{}.
\newblock \showarticletitle{Collaborative filtering recommender systems}.
\newblock \bibinfo{journal}{\emph{Foundations and Trends{\textregistered} in
  Human--Computer Interaction}} \bibinfo{volume}{4}, \bibinfo{number}{2}
  (\bibinfo{year}{2011}), \bibinfo{pages}{81--173}.
\newblock


\bibitem[Evans(2008)]%
        {evans2008economics}
\bibfield{author}{\bibinfo{person}{David~S Evans}.}
  \bibinfo{year}{2008}\natexlab{}.
\newblock \showarticletitle{The economics of the online advertising industry}.
\newblock \bibinfo{journal}{\emph{Review of network economics}}
  \bibinfo{volume}{7}, \bibinfo{number}{3} (\bibinfo{year}{2008}).
\newblock


\bibitem[Gomez-Uribe and Hunt(2015)]%
        {gomez2015netflix}
\bibfield{author}{\bibinfo{person}{Carlos~A Gomez-Uribe} {and}
  \bibinfo{person}{Neil Hunt}.} \bibinfo{year}{2015}\natexlab{}.
\newblock \showarticletitle{The netflix recommender system: Algorithms,
  business value, and innovation}.
\newblock \bibinfo{journal}{\emph{ACM Transactions on Management Information
  Systems (TMIS)}} \bibinfo{volume}{6}, \bibinfo{number}{4}
  (\bibinfo{year}{2015}), \bibinfo{pages}{1--19}.
\newblock


\bibitem[Gu et~al\mbox{.}(2017)]%
        {gu2017deep}
\bibfield{author}{\bibinfo{person}{Shixiang Gu}, \bibinfo{person}{Ethan Holly},
  \bibinfo{person}{Timothy Lillicrap}, {and} \bibinfo{person}{Sergey Levine}.}
  \bibinfo{year}{2017}\natexlab{}.
\newblock \showarticletitle{Deep reinforcement learning for robotic
  manipulation with asynchronous off-policy updates}. In
  \bibinfo{booktitle}{\emph{2017 IEEE international conference on robotics and
  automation (ICRA)}}. IEEE, \bibinfo{pages}{3389--3396}.
\newblock


\bibitem[Guo et~al\mbox{.}(2023)]%
        {Guo2023evaluate}
\bibfield{author}{\bibinfo{person}{Hongbo Guo}, \bibinfo{person}{Ruben Naeff},
  \bibinfo{person}{Alex Nikulkov}, {and} \bibinfo{person}{Zheqing Zhu}.}
  \bibinfo{year}{2023}\natexlab{}.
\newblock \showarticletitle{Evaluating Online Bandit Exploration In Large-Scale
  Recommender System}. In \bibinfo{booktitle}{\emph{KDD-23 Workshop on
  Multi-Armed Bandits and Reinforcement Learning: Advancing Decision Making in
  E-Commerce and Beyond}}.
\newblock


\bibitem[Hidasi et~al\mbox{.}(2015)]%
        {hidasi2015session}
\bibfield{author}{\bibinfo{person}{Bal{\'a}zs Hidasi},
  \bibinfo{person}{Alexandros Karatzoglou}, \bibinfo{person}{Linas Baltrunas},
  {and} \bibinfo{person}{Domonkos Tikk}.} \bibinfo{year}{2015}\natexlab{}.
\newblock \showarticletitle{Session-based recommendations with recurrent neural
  networks}.
\newblock \bibinfo{journal}{\emph{arXiv preprint arXiv:1511.06939}}
  (\bibinfo{year}{2015}).
\newblock


\bibitem[Hutter(2007)]%
        {hutter2007universal}
\bibfield{author}{\bibinfo{person}{Marcus Hutter}.}
  \bibinfo{year}{2007}\natexlab{}.
\newblock \showarticletitle{Universal algorithmic intelligence: A mathematical
  top→ down approach}.
\newblock \bibinfo{journal}{\emph{Artificial general intelligence}}
  (\bibinfo{year}{2007}), \bibinfo{pages}{227--290}.
\newblock


\bibitem[Ie et~al\mbox{.}(2019)]%
        {ie2019slateq}
\bibfield{author}{\bibinfo{person}{Eugene Ie}, \bibinfo{person}{Vihan Jain},
  \bibinfo{person}{Jing Wang}, \bibinfo{person}{Sanmit Narvekar},
  \bibinfo{person}{Ritesh Agarwal}, \bibinfo{person}{Rui Wu},
  \bibinfo{person}{Heng-Tze Cheng}, \bibinfo{person}{Tushar Chandra}, {and}
  \bibinfo{person}{Craig Boutilier}.} \bibinfo{year}{2019}\natexlab{}.
\newblock \showarticletitle{SlateQ: A tractable decomposition for reinforcement
  learning with recommendation sets}.
\newblock  (\bibinfo{year}{2019}).
\newblock


\bibitem[Koren et~al\mbox{.}(2009)]%
        {koren2009matrix}
\bibfield{author}{\bibinfo{person}{Yehuda Koren}, \bibinfo{person}{Robert
  Bell}, {and} \bibinfo{person}{Chris Volinsky}.}
  \bibinfo{year}{2009}\natexlab{}.
\newblock \showarticletitle{Matrix factorization techniques for recommender
  systems}.
\newblock \bibinfo{journal}{\emph{Computer}} \bibinfo{volume}{42},
  \bibinfo{number}{8} (\bibinfo{year}{2009}), \bibinfo{pages}{30--37}.
\newblock


\bibitem[Li et~al\mbox{.}(2010)]%
        {li2010contextual}
\bibfield{author}{\bibinfo{person}{Lihong Li}, \bibinfo{person}{Wei Chu},
  \bibinfo{person}{John Langford}, {and} \bibinfo{person}{Robert~E Schapire}.}
  \bibinfo{year}{2010}\natexlab{}.
\newblock \showarticletitle{A contextual-bandit approach to personalized news
  article recommendation}. In \bibinfo{booktitle}{\emph{Proceedings of the 19th
  international conference on World wide web}}. \bibinfo{pages}{661--670}.
\newblock


\bibitem[Lops et~al\mbox{.}(2011)]%
        {lops2011content}
\bibfield{author}{\bibinfo{person}{Pasquale Lops}, \bibinfo{person}{Marco
  De~Gemmis}, {and} \bibinfo{person}{Giovanni Semeraro}.}
  \bibinfo{year}{2011}\natexlab{}.
\newblock \showarticletitle{Content-based recommender systems: State of the art
  and trends}.
\newblock \bibinfo{journal}{\emph{Recommender systems handbook}}
  (\bibinfo{year}{2011}), \bibinfo{pages}{73--105}.
\newblock


\bibitem[Lu et~al\mbox{.}(2015)]%
        {lu2015recommender}
\bibfield{author}{\bibinfo{person}{Jie Lu}, \bibinfo{person}{Dianshuang Wu},
  \bibinfo{person}{Mingsong Mao}, \bibinfo{person}{Wei Wang}, {and}
  \bibinfo{person}{Guangquan Zhang}.} \bibinfo{year}{2015}\natexlab{}.
\newblock \showarticletitle{Recommender system application developments: a
  survey}.
\newblock \bibinfo{journal}{\emph{Decision support systems}}
  \bibinfo{volume}{74} (\bibinfo{year}{2015}), \bibinfo{pages}{12--32}.
\newblock


\bibitem[Lu et~al\mbox{.}(2021)]%
        {lu2021reinforcement}
\bibfield{author}{\bibinfo{person}{Xiuyuan Lu}, \bibinfo{person}{Benjamin
  Van~Roy}, \bibinfo{person}{Vikranth Dwaracherla}, \bibinfo{person}{Morteza
  Ibrahimi}, \bibinfo{person}{Ian Osband}, {and} \bibinfo{person}{Zheng Wen}.}
  \bibinfo{year}{2021}\natexlab{}.
\newblock \showarticletitle{Reinforcement learning, bit by bit}.
\newblock \bibinfo{journal}{\emph{arXiv preprint arXiv:2103.04047}}
  (\bibinfo{year}{2021}).
\newblock


\bibitem[Mnih et~al\mbox{.}(2015)]%
        {mnih2015human}
\bibfield{author}{\bibinfo{person}{Volodymyr Mnih}, \bibinfo{person}{Koray
  Kavukcuoglu}, \bibinfo{person}{David Silver}, \bibinfo{person}{Andrei~A
  Rusu}, \bibinfo{person}{Joel Veness}, \bibinfo{person}{Marc~G Bellemare},
  \bibinfo{person}{Alex Graves}, \bibinfo{person}{Martin Riedmiller},
  \bibinfo{person}{Andreas~K Fidjeland}, \bibinfo{person}{Georg Ostrovski},
  {et~al\mbox{.}}} \bibinfo{year}{2015}\natexlab{}.
\newblock \showarticletitle{Human-level control through deep reinforcement
  learning}.
\newblock \bibinfo{journal}{\emph{nature}} \bibinfo{volume}{518},
  \bibinfo{number}{7540} (\bibinfo{year}{2015}), \bibinfo{pages}{529--533}.
\newblock


\bibitem[Okura et~al\mbox{.}(2017)]%
        {okura2017embedding}
\bibfield{author}{\bibinfo{person}{Shumpei Okura}, \bibinfo{person}{Yukihiro
  Tagami}, \bibinfo{person}{Shingo Ono}, {and} \bibinfo{person}{Akira Tajima}.}
  \bibinfo{year}{2017}\natexlab{}.
\newblock \showarticletitle{Embedding-based news recommendation for millions of
  users}. In \bibinfo{booktitle}{\emph{Proceedings of the 23rd ACM SIGKDD
  international conference on knowledge discovery and data mining}}.
  \bibinfo{pages}{1933--1942}.
\newblock


\bibitem[Pazzani and Billsus(2007)]%
        {pazzani2007content}
\bibfield{author}{\bibinfo{person}{Michael~J Pazzani} {and}
  \bibinfo{person}{Daniel Billsus}.} \bibinfo{year}{2007}\natexlab{}.
\newblock \showarticletitle{Content-based recommendation systems}.
\newblock \bibinfo{journal}{\emph{The adaptive web: methods and strategies of
  web personalization}} (\bibinfo{year}{2007}), \bibinfo{pages}{325--341}.
\newblock


\bibitem[Qin et~al\mbox{.}(2014)]%
        {qin2014contextual}
\bibfield{author}{\bibinfo{person}{Lijing Qin}, \bibinfo{person}{Shouyuan
  Chen}, {and} \bibinfo{person}{Xiaoyan Zhu}.} \bibinfo{year}{2014}\natexlab{}.
\newblock \showarticletitle{Contextual combinatorial bandit and its application
  on diversified online recommendation}. In
  \bibinfo{booktitle}{\emph{Proceedings of the 2014 SIAM International
  Conference on Data Mining}}. SIAM, \bibinfo{pages}{461--469}.
\newblock


\bibitem[Quadrana et~al\mbox{.}(2018)]%
        {quadrana2018sequence}
\bibfield{author}{\bibinfo{person}{Massimo Quadrana}, \bibinfo{person}{Paolo
  Cremonesi}, {and} \bibinfo{person}{Dietmar Jannach}.}
  \bibinfo{year}{2018}\natexlab{}.
\newblock \showarticletitle{Sequence-aware recommender systems}.
\newblock \bibinfo{journal}{\emph{ACM Computing Surveys (CSUR)}}
  \bibinfo{volume}{51}, \bibinfo{number}{4} (\bibinfo{year}{2018}),
  \bibinfo{pages}{1--36}.
\newblock


\bibitem[Schafer et~al\mbox{.}(2007)]%
        {schafer2007collaborative}
\bibfield{author}{\bibinfo{person}{J~Ben Schafer}, \bibinfo{person}{Dan
  Frankowski}, \bibinfo{person}{Jon Herlocker}, {and} \bibinfo{person}{Shilad
  Sen}.} \bibinfo{year}{2007}\natexlab{}.
\newblock \showarticletitle{Collaborative filtering recommender systems}.
\newblock \bibinfo{journal}{\emph{The adaptive web: methods and strategies of
  web personalization}} (\bibinfo{year}{2007}), \bibinfo{pages}{291--324}.
\newblock


\bibitem[Shani et~al\mbox{.}(2005)]%
        {shani2005mdp}
\bibfield{author}{\bibinfo{person}{Guy Shani}, \bibinfo{person}{David
  Heckerman}, \bibinfo{person}{Ronen~I Brafman}, {and} \bibinfo{person}{Craig
  Boutilier}.} \bibinfo{year}{2005}\natexlab{}.
\newblock \showarticletitle{An MDP-based recommender system.}
\newblock \bibinfo{journal}{\emph{Journal of Machine Learning Research}}
  \bibinfo{volume}{6}, \bibinfo{number}{9} (\bibinfo{year}{2005}).
\newblock


\bibitem[Shi et~al\mbox{.}(2014)]%
        {shi2014collaborative}
\bibfield{author}{\bibinfo{person}{Yue Shi}, \bibinfo{person}{Martha Larson},
  {and} \bibinfo{person}{Alan Hanjalic}.} \bibinfo{year}{2014}\natexlab{}.
\newblock \showarticletitle{Collaborative filtering beyond the user-item
  matrix: A survey of the state of the art and future challenges}.
\newblock \bibinfo{journal}{\emph{ACM Computing Surveys (CSUR)}}
  \bibinfo{volume}{47}, \bibinfo{number}{1} (\bibinfo{year}{2014}),
  \bibinfo{pages}{1--45}.
\newblock


\bibitem[Smith and Linden(2017)]%
        {smith2017two}
\bibfield{author}{\bibinfo{person}{Brent Smith} {and} \bibinfo{person}{Greg
  Linden}.} \bibinfo{year}{2017}\natexlab{}.
\newblock \showarticletitle{Two decades of recommender systems at Amazon. com}.
\newblock \bibinfo{journal}{\emph{Ieee internet computing}}
  \bibinfo{volume}{21}, \bibinfo{number}{3} (\bibinfo{year}{2017}),
  \bibinfo{pages}{12--18}.
\newblock


\bibitem[Tai et~al\mbox{.}(2017)]%
        {tai2017virtual}
\bibfield{author}{\bibinfo{person}{Lei Tai}, \bibinfo{person}{Giuseppe Paolo},
  {and} \bibinfo{person}{Ming Liu}.} \bibinfo{year}{2017}\natexlab{}.
\newblock \showarticletitle{Virtual-to-real deep reinforcement learning:
  Continuous control of mobile robots for mapless navigation}. In
  \bibinfo{booktitle}{\emph{2017 IEEE/RSJ International Conference on
  Intelligent Robots and Systems (IROS)}}. IEEE, \bibinfo{pages}{31--36}.
\newblock


\bibitem[Varian(2007)]%
        {varian2007position}
\bibfield{author}{\bibinfo{person}{Hal~R Varian}.}
  \bibinfo{year}{2007}\natexlab{}.
\newblock \showarticletitle{Position auctions}.
\newblock \bibinfo{journal}{\emph{international Journal of industrial
  Organization}} \bibinfo{volume}{25}, \bibinfo{number}{6}
  (\bibinfo{year}{2007}), \bibinfo{pages}{1163--1178}.
\newblock


\bibitem[Varian and Harris(2014)]%
        {varian2014vcg}
\bibfield{author}{\bibinfo{person}{Hal~R Varian} {and}
  \bibinfo{person}{Christopher Harris}.} \bibinfo{year}{2014}\natexlab{}.
\newblock \showarticletitle{The VCG auction in theory and practice}.
\newblock \bibinfo{journal}{\emph{American Economic Review}}
  \bibinfo{volume}{104}, \bibinfo{number}{5} (\bibinfo{year}{2014}),
  \bibinfo{pages}{442--445}.
\newblock


\bibitem[Wang et~al\mbox{.}(2017)]%
        {wang2017factorization}
\bibfield{author}{\bibinfo{person}{Huazheng Wang}, \bibinfo{person}{Qingyun
  Wu}, {and} \bibinfo{person}{Hongning Wang}.} \bibinfo{year}{2017}\natexlab{}.
\newblock \showarticletitle{Factorization bandits for interactive
  recommendation}. In \bibinfo{booktitle}{\emph{Proceedings of the AAAI
  Conference on Artificial Intelligence}}, Vol.~\bibinfo{volume}{31}.
\newblock


\bibitem[Wang et~al\mbox{.}(2019)]%
        {wang2019sequential}
\bibfield{author}{\bibinfo{person}{Shoujin Wang}, \bibinfo{person}{Liang Hu},
  \bibinfo{person}{Yan Wang}, \bibinfo{person}{Longbing Cao},
  \bibinfo{person}{Quan~Z Sheng}, {and} \bibinfo{person}{Mehmet Orgun}.}
  \bibinfo{year}{2019}\natexlab{}.
\newblock \showarticletitle{Sequential recommender systems: challenges,
  progress and prospects}.
\newblock \bibinfo{journal}{\emph{arXiv preprint arXiv:2001.04830}}
  (\bibinfo{year}{2019}).
\newblock


\bibitem[Xin et~al\mbox{.}(2020)]%
        {xin2020self}
\bibfield{author}{\bibinfo{person}{Xin Xin}, \bibinfo{person}{Alexandros
  Karatzoglou}, \bibinfo{person}{Ioannis Arapakis}, {and}
  \bibinfo{person}{Joemon~M Jose}.} \bibinfo{year}{2020}\natexlab{}.
\newblock \showarticletitle{Self-supervised reinforcement learning for
  recommender systems}. In \bibinfo{booktitle}{\emph{Proceedings of the 43rd
  International ACM SIGIR conference on research and development in Information
  Retrieval}}. \bibinfo{pages}{931--940}.
\newblock


\bibitem[Zeng et~al\mbox{.}(2016)]%
        {zeng2016online}
\bibfield{author}{\bibinfo{person}{Chunqiu Zeng}, \bibinfo{person}{Qing Wang},
  \bibinfo{person}{Shekoofeh Mokhtari}, {and} \bibinfo{person}{Tao Li}.}
  \bibinfo{year}{2016}\natexlab{}.
\newblock \showarticletitle{Online context-aware recommendation with time
  varying multi-armed bandit}. In \bibinfo{booktitle}{\emph{Proceedings of the
  22nd ACM SIGKDD international conference on Knowledge discovery and data
  mining}}. \bibinfo{pages}{2025--2034}.
\newblock


\bibitem[Zhang et~al\mbox{.}(2019)]%
        {zhang2019deep}
\bibfield{author}{\bibinfo{person}{Shuai Zhang}, \bibinfo{person}{Lina Yao},
  \bibinfo{person}{Aixin Sun}, {and} \bibinfo{person}{Yi Tay}.}
  \bibinfo{year}{2019}\natexlab{}.
\newblock \showarticletitle{Deep learning based recommender system: A survey
  and new perspectives}.
\newblock \bibinfo{journal}{\emph{ACM computing surveys (CSUR)}}
  \bibinfo{volume}{52}, \bibinfo{number}{1} (\bibinfo{year}{2019}),
  \bibinfo{pages}{1--38}.
\newblock


\bibitem[Zhao et~al\mbox{.}(2019)]%
        {zhao2019deep}
\bibfield{author}{\bibinfo{person}{Xiangyu Zhao}, \bibinfo{person}{Long Xia},
  \bibinfo{person}{Jiliang Tang}, {and} \bibinfo{person}{Dawei Yin}.}
  \bibinfo{year}{2019}\natexlab{}.
\newblock \showarticletitle{" Deep reinforcement learning for search,
  recommendation, and online advertising: a survey" by Xiangyu Zhao, Long Xia,
  Jiliang Tang, and Dawei Yin with Martin Vesely as coordinator}.
\newblock \bibinfo{journal}{\emph{ACM Sigweb Newsletter}}
  \bibinfo{number}{Spring} (\bibinfo{year}{2019}), \bibinfo{pages}{1--15}.
\newblock


\bibitem[Zheng et~al\mbox{.}(2018)]%
        {zheng2018drn}
\bibfield{author}{\bibinfo{person}{Guanjie Zheng}, \bibinfo{person}{Fuzheng
  Zhang}, \bibinfo{person}{Zihan Zheng}, \bibinfo{person}{Yang Xiang},
  \bibinfo{person}{Nicholas~Jing Yuan}, \bibinfo{person}{Xing Xie}, {and}
  \bibinfo{person}{Zhenhui Li}.} \bibinfo{year}{2018}\natexlab{}.
\newblock \showarticletitle{DRN: A deep reinforcement learning framework for
  news recommendation}. In \bibinfo{booktitle}{\emph{Proceedings of the 2018
  world wide web conference}}. \bibinfo{pages}{167--176}.
\newblock


\bibitem[Zhu and Van~Roy(2023a)]%
        {zhu2023deep}
\bibfield{author}{\bibinfo{person}{Zheqing Zhu} {and} \bibinfo{person}{Benjamin
  Van~Roy}.} \bibinfo{year}{2023}\natexlab{a}.
\newblock \showarticletitle{Deep Exploration for Recommendation Systems}. In
  \bibinfo{booktitle}{\emph{Proceedings of the 17th ACM Conference on
  Recommender Systems}}.
\newblock


\bibitem[Zhu and Van~Roy(2023b)]%
        {zhu2023scalable}
\bibfield{author}{\bibinfo{person}{Zheqing Zhu} {and} \bibinfo{person}{Benjamin
  Van~Roy}.} \bibinfo{year}{2023}\natexlab{b}.
\newblock \showarticletitle{Scalable Neural Contextual Bandit for Recommender
  Systems}.
\newblock \bibinfo{journal}{\emph{arXiv preprint arXiv:2306.14834}}
  (\bibinfo{year}{2023}).
\newblock


\bibitem[Zou et~al\mbox{.}(2019)]%
        {zou2019reinforcement}
\bibfield{author}{\bibinfo{person}{Lixin Zou}, \bibinfo{person}{Long Xia},
  \bibinfo{person}{Zhuoye Ding}, \bibinfo{person}{Jiaxing Song},
  \bibinfo{person}{Weidong Liu}, {and} \bibinfo{person}{Dawei Yin}.}
  \bibinfo{year}{2019}\natexlab{}.
\newblock \showarticletitle{Reinforcement learning to optimize long-term user
  engagement in recommender systems}. In \bibinfo{booktitle}{\emph{Proceedings
  of the 25th ACM SIGKDD International Conference on Knowledge Discovery \&
  Data Mining}}. \bibinfo{pages}{2810--2818}.
\newblock


\end{thebibliography}

\appendix

\end{document}